\documentclass[pra,twocolumn,superscriptaddress,floatfix,amsmath,nofootinbib,amssymb]{revtex4-1}
\usepackage[UKenglish]{babel}
\usepackage{graphicx}% Include figure files
\usepackage{dcolumn}% Align table columns on decimal point
\usepackage{bm}% bold math
\usepackage{verbatim}
\usepackage{mathrsfs}
\usepackage{color}

%Dirac Notation and more
\newcommand{\ket}[1]{\left| {#1} \right\rangle}
\newcommand{\bra}[1]{\left\langle {#1} \right|}

\newcommand{\proj}[2]{\left| {#1} \right\rangle\!\left\langle {#2} \right|}

\newcommand{\tr}{\operatorname{Tr}}

\newcommand{\eq}[1]{(\ref{#1})}

\newcommand{\up}{\uparrow}
\newcommand{\down}{\downarrow}

%Comandos de theorema, for Bradler only
\newtheorem{theorem}{Theorem}[section]

\newenvironment{proof}[1][Proof]{\begin{trivlist}
\item[\hskip \labelsep {\bfseries #1}]}{\end{trivlist}}
\newenvironment{definition}[1][Definition]{\begin{trivlist}
\item[\hskip \labelsep {\bfseries #1}]}{\end{trivlist}}

\newcommand{\qed}{\nobreak \ifvmode \relax \else
      \ifdim\lastskip<1.5em \hskip-\lastskip
      \hskip1.5em plus0em minus0.5em \fi \nobreak
      \vrule height0.75em width0.5em depth0.25em\fi}

\begin{document}
\title{Comment on ``On two misconceptions in current relativistic quantum information''}
\author{Miguel Montero}
\affiliation{Instituto de F\'{i}sica Fundamental, CSIC, Serrano 113-B, 28006 Madrid, Spain}
\author{Eduardo Mart\'{i}n-Mart\'{i}nez}
\affiliation{Instituto de F\'{i}sica Fundamental, CSIC, Serrano 113-B, 28006 Madrid, Spain}

\begin{abstract}
We show that, contrarily to the claims in [K. Bradler, (2011), arXiv:1108.5553] (and arXiv:1201.1045), the reference [Phys. Rev. A 83, 062323 (2011)] is devoid of flaws or mistakes. We show that the criticism comes from a misunderstanding of part of our work. We also discuss the claims made in arXiv:1108.5553 about entanglement measures as a valid tool in relativistic quantum information. 
\end{abstract}

\maketitle

In \cite{Mig2} we showed that some previous results on fermionic entanglement in non-inertial frames were not physical due to the arbitrary endowing of a tensor product structure on the fermionic field before computing entanglement measures. We found that this scheme of endowing a tensor product structure to the Fock space may still yield physical results if a particular tensor product structure (different from that of the literature) was chosen.

In \cite{br}, K. Br\'adler claimed that our work is flawed. His criticisms seem to reflect that he understood our work as claiming that there is a real ambiguity in physical results  when studying entanglement in noninertial frames. This is not our claim, and in this work we intend to show explicitly where the misunderstandings lie. We will show that the formalism in \cite{Mig2} is exactly equivalent to the derivation provided in \cite{br}. K. Br\'adler and R. J\'auregui recently published a comment to \cite{Mig2} that appeared in \cite{br2}. The contents of \cite{br2} are fully contained in \cite{br}, and hence this paper constitutes a contestation to both works.

Indeed, in \cite{Mig2} we do not claim that there is an ambiguity in the way entanglement is defined in fermionic systems, but rather, it is in the way it had been computed by some previous works on field entanglement in noninertial frames. After considering \cite{br}, we feel that this crucial point was not clear enough in our original article.

It is true that seeking to illustrate the problem with simple examples of the phenomenon we wanted to show, we chose field states not complaining with the fermionic superselection rule. This was regarded by \cite{br} as a fatal flaw of our work. In this reply we show that it was only was an academic choice of the examples for the sake of simplicity. Indeed, the particular states used in these examples are irrelevant for the results of \cite{Mig2}, so we give here similar examples but now compliant with the rule showing that this fact has no impact on our claims.

In section \ref{TPS} we explicitly show the equivalence of the formalism in \cite{br} with the one used in \cite{Mig2}, both giving exactly the same results. In section \ref{super} we discuss the fermionic superselection rule as applied to our states, and reviews the negativity as a valid entanglement measure for fermionic systems, contrasting our views with those of \cite{br}. Finally, section \ref{cojoniak} presents our conclusions.

\section{Tensor product structures and entanglement}\label{TPS}

We will show that the final results of \cite{Mig2} (see sec. IV of that work) are consistent with the fermionic nature of the field just as well as the results in \cite{br}, contrary to the claims in \cite{br}. 
 
First, the author of \cite{br} correctly asserts that there is no natural tensor product structure in a multipartite fermionic system, as this would conflict with the canonical anticommutation relations. \cite{br} considers that these different results appeared only because we did not respect the canonical anticommutation relations.

However, this is not the case. The fact that there is no natural tensor product structure in a fermionic system is a well-known problem  and especially manifests in quantum information theory when partitions of fermionic systems have to be defined.

Not even the braided tensor product suggested in \cite{br}  is exempt of these issues when the problem of separability comes forth. 
As a concrete  example, consider the fermionic 4-mode state 
\begin{align}\ket{\Psi}&=\frac{1}{2}\left(\ket{1_a1_c}+\ket{1_a1_d}+\ket{1_b1_c}+\ket{1_b1_d}\right)\nonumber\\&=\left[\frac{1}{\sqrt{2}}\left(\ket{1_a}+\ket{1_b}\right)\right]\tilde\otimes\left[\frac{1}{\sqrt{2}}\left(\ket{1_c}+\ket{1_d}\right)\right].\label{st1}\end{align}
which factors with respect to the braided tensor product when expressed in the basis 
\begin{align}\ket{1_a1_c}&=a^\dagger \tilde\otimes\,\; c^\dagger \ket{0},\; \ket{1_b1_c}=b^\dagger \tilde\otimes\,\; c^\dagger\ket{0},\nonumber\\ \ket{1_a1_d}&=a^\dagger \tilde\otimes\,\; d^\dagger\ket{0},\ket{1_b1_d}=b^\dagger\tilde\otimes\,\; d^\dagger\ket{0}.\label{bt1}\end{align}
This state can be also expressed in the basis 
\begin{align}\ket{1_a1_c}'&=a^\dagger \tilde\otimes\,\; c^\dagger \ket{0},\; \ket{1_b1_c}'=b^\dagger \tilde\otimes\,\; c^\dagger\ket{0},\nonumber\\ \ket{1_a1_d}'&=a^\dagger \tilde\otimes\,\; d^\dagger\ket{0},\ket{1_d1_b}'=-d^\dagger\tilde\otimes\,\; b^\dagger\ket{0}.\label{bt2}\end{align}
that differs in one sign due to the anti-symmetry properties of fermionic systems (we have permuted the labels of first and second subsystem). Then
it reads
\begin{align}\ket{\Psi}&=\frac{1}{2}\left(\ket{1_a1_c}+\ket{1_a1_d}+\ket{1_b1_c}-\ket{1_d1_b}\right),\end{align}
which is not factorisable (in fact it is maximally entangled with respect to the basis defined by the braided product). Of course we are not extracting any conclusions about the physical meaning of this factorisability, only showing that problems with multipartite fermionic tensorial structures are present even if we use the formalism in  \cite{br} and even if the state has no superselection rule issues. In the first version of \cite{br}, a similar example was disregarded due to the lack of compliance with the fermionic superselection rule discussed in \cite{br} and below. Version 2 of \cite{br} addresses the example \eqref{st1} and correctly explains it by asserting that entanglement is partition-dependent. This is precisely our conclusion: different fermionic operator orderings may result in different bipartitions and thus different entanglement behaviours. %We are glad to see that \cite{br} now agrees on this point.

However, we are not claiming that the formalism used in \cite{br} is incorrect, on the contrary, we claim that it is equivalent to what we do in \cite{Mig2}. In that article we simply use a different approach than that in \cite{br} which is equally valid since all the physical conclusions are the same. To show that, we now proceed to describe our approach in detail from a somewhat more mathematical viewpoint than that used in \cite{Mig2}; to do so, we start with some definitions.

We shall consider a particular set of $m+n$ fermionic modes. The annihilation operators of these modes will be labeled as $a_i$ and $ c_j$, with $i=\{1\ldots n\}$, $j=\{1\ldots m\}$. These modes span a $2^{m+n}$-dimensional fermionic Fock space when acting on their vacuum state $\ket{0}_a\tilde\otimes\ket{0}_c$. We can define the usual particle number basis as follows:
\begin{align}&{\underbrace{\ket{1\ldots1}}_{m+n}}\ _{a_1\ldots a_n  c_1 \ldots  c_m}\nonumber\\&=\left(a_1\right)^\dagger\ldots\left(a_n\right)^\dagger\left( c_1 \right)^\dagger\dots\left( c_m\right)^\dagger\ket{0}_a\tilde\otimes\ket{0}_c\label{1}\end{align}
and for any basis elements with a $0$ instead of a $1$, we just remove the corresponding operator on the r.h.s.

We also consider a field state $\rho$ over this Hilbert space. This state will be uniquely determined by its matrix elements in the basis described above, so that
\begin{align}\rho=\sum_{k_a,l_a,k_c,l_c}\rho_{k_a,l_a,k_c,l_c}\proj{k_ak_c}{l_cl_a}\label{st}\end{align}
where $k_a$, $k_c$ are strings of 0's and 1's of length $n$ and $l_a$ and $l_c$ are strings of 0's and 1's of length $m$. Here, it is worth noticing that 
\begin{align}(\ket{l}_{a_1\ldots a_n  c_1 \ldots  c_m})^\dagger=\bra{l}_{ c_m\ldots  c_1 a_n \ldots a_1}.\label{xs}\end{align}
Note the reordering of the mode labels in \eq{xs}.%, which is a simple restatement of the property that $(AB)^\dagger=B^\dagger A^\dagger$.

Finally, we introduce a mapping from this fermionic Fock space to a qubit Hilbert space of $m+n$ modes:
\begin{definition} \label{qdef}Let $a^\text{Bos.}_i$ and $c^\text{Bos.}_j$ be two sets of bosonic operators acting on their vacuum state $\ket{0}$. For each fermionic state $\rho$, we define $\mathcal{Q}(\rho)$ as the state formally obtained as follows: First, write down $\rho$ in the basis \eq{1} (i.e. as in \eq{st}). Then, formally replace each $a_i$ with $a_i^\text{Bos.}$, $ c_j$ with $ c_j^\text{Bos.}$ and the vacuum state $\ket{0}$ with $\ket{0}$.
\end{definition}

We will denote the inverse mapping by $\mathcal{Q}^{-1}$. The restriction of this mapping to the subspace spanned by $a^\text{Bos.}$ modes on $\ket{0}$ will be denoted as $\mathcal{Q}^{-1}\vert_a$.

Note that this definition relies heavily on the use of the basis $\eq{1}$. We can define analogous mappings which are different to $\mathcal{Q}$ and which rely on different bases, obtained e.g. by permuting the operators in \eq{1}. This way we obtain a family of different mappings, each of them associated to an `operator ordering' . The mapping $\mathcal{Q}$ corresponds to what we call the `physical ordering' in \cite{Mig2}. 

The map $\mathcal{Q}$ can be said to be equivalent to the `endowing' of a particular tensor product structure on the fermionic Fock space. This formal endowing has been widely used in the literature  \cite{AlsingSchul,Edu2,Edu3,allicupahc,allicupahc2,Shapoor,Geneferm,chor1,Edu9,chor2,Cirac}. We are aware of the fact, pointed out in \cite{br}, that fermions and bosons correspond to different group representations; this is not an obstacle for the definition of $\mathcal{Q}$. Of course, if care is not taken in the choice of basis \eq{1}, this procedure leads to unphysical results for reduced states and entanglement measures among other things. \cite{Mig2} studies when and how this procedure can be used to obtain correct, physical results. These results are the same as those which would be obtained with the braided formalism of \cite{br}.

Finally, we remark that the key point in the definition of $\mathcal{Q}$ is the relative ordering between the $a$ and $c$ operators. A permutation of $c$-operators in \eq{1} is  irrelevant. If we permute, for instance, the positions of $c_3$ and $c_2$, then it is a simple matter of defining the basis as
\begin{align}&{\underbrace{\ket{1\ldots1}'}_{m+n}}\ _{a_1\ldots a_n  c_1 c_3 c_2 \ldots  c_m}=\nonumber\\&\left(a_1\right)^\dagger\ldots\left(a_n\right)^\dagger\left( c_1\right)^\dagger \left( c_3\right)^\dagger \left( c_2\right)^\dagger\dots\left( c_m\right)^\dagger\ket{0}_a\tilde\otimes\ket{0}_c\label{12},\end{align}
and define $\mathcal{Q}$ using this basis.

Now, for some reason, we are interested in tracing out the $c$ modes. In the context of \cite{Mig2}, this is because field modes living in region II of Rindler spacetime are causally disconnected from the observers.  This is implemented by the partial trace, which may be defined as
\begin{align}&\rho\rightarrow \tr_c(\rho)\equiv\nonumber\\&\sum_{j_1,j_2\ldots j_m\in\{0,1\}}\bra{0}_c( c_m)^{j_m}\ldots( c_1)^{j_1}\rho(c^\dagger_1)^{j_1}\ldots(c^\dagger_m)^{j_m}\ket{0}_c\label{2}\end{align}

Note that \eq{2} may give ambiguous results for states not complying with the superselection rule of \cite{br}, but it defines an unique reduced state for states within a given superselection sector.

%The partial trace \eq{2} can be obviously carried out in any basis  we want (corresponding for instance to other operator orderings). In most cases, there will appear some anticommutation signs. These signs are not accounted for if one  basis-dependent mapping: If these signs appear, then the mapping to a qubit space is deemed to give wrong results. This is the original motivation for our paper.

We now prove that the same result is obtained through direct application of \eq{2} or, equivalently, through the use of the $\mathcal{Q}$ mapping (i.e. the tensor product structure associated to the `physical ordering'), which is what we do in our paper:
\begin{theorem} Let $\rho$ be a fermionic state compliant with the fermionic superselection rule. Then the equality
\begin{equation*}\tr_c(\rho)=\mathcal{Q}^{-1}\vert_a\circ\tr^\text{Bos.}_c(\mathcal{Q})(\rho)\end{equation*}
holds, where $\tr^\text{Bos.}_c$ is the usual partial trace operation in a qubit system.\end{theorem}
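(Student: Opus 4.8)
The plan is to use that both maps are linear in $\rho$ and to reduce the identity to a single matrix unit of the expansion \eq{st}. Since $\tr_c$ in \eq{2} and the composite $\mathcal{Q}^{-1}\vert_a\circ\tr^\text{Bos.}_c\circ\mathcal{Q}$ are both linear in $\rho$, it suffices to verify the equality on each basis operator $\proj{k_ak_c}{l_cl_a}$ and then sum against the coefficients $\rho_{k_a,l_a,k_c,l_c}$. The whole difficulty will be confined to comparing the fermionic reordering signs produced by \eq{2} with the (sign-free) qubit partial trace.

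For the right-hand side I would first note that $\mathcal{Q}$ sends $\proj{k_ak_c}{l_cl_a}$ to the qubit operator $\proj{k_a}{l_a}\otimes\proj{k_c}{l_c}$, no signs arising because the bosonic replacements commute across modes. The ordinary qubit trace then gives $\tr^\text{Bos.}_c\big(\proj{k_a}{l_a}\otimes\proj{k_c}{l_c}\big)=\delta_{k_c,l_c}\proj{k_a}{l_a}$, and $\mathcal{Q}^{-1}\vert_a$ returns the fermionic $a$-sector operator with coefficient $\delta_{k_c,l_c}$. Thus the right-hand side is manifestly $\delta_{k_c,l_c}$ times the $a$-sector matrix unit $\proj{k_a}{l_a}$.

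The content of the proof is the left-hand side. I would write each basis operator as a product of the ordered $a$- and $c$-creation strings of \eq{1} acting on $\proj{0}{0}$, with the bra reordered according to \eq{xs}, and insert it into \eq{2}. Evaluating the $c$-vacuum overlaps forces the trace label $j$ to equal both $k_c$ and $l_c$, producing the factor $\delta_{k_c,l_c}$ and leaving the $a$-sector operator $\proj{k_a}{l_a}$; the same internal $a$-orderings occur here and on the right-hand side, so they cancel from the comparison. What does not cancel is the sign generated by anticommuting the $c$-annihilation and $c$-creation operators of \eq{2} past the $a$-operators on the ket and bra sides, which I expect to collect to $(-1)^{|k_c|(|k_a|+|l_a|)}$, where $|\cdot|$ denotes the total occupation of a string. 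Proving that this phase is trivial is the main obstacle, as it is exactly the ordering-dependent sign illustrated by the example \eq{st1}.

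Finally I would invoke the hypothesis. Compliance with the fermionic superselection rule means $\rho_{k_a,l_a,k_c,l_c}$ vanishes unless $|k_a|+|k_c|\equiv|l_a|+|l_c|\pmod 2$; together with the surviving constraint $k_c=l_c$ this forces $|k_a|\equiv|l_a|\pmod 2$, so that $|k_a|+|l_a|$ is even and the phase $(-1)^{|k_c|(|k_a|+|l_a|)}$ equals $+1$ on every term that actually contributes. Hence $\tr_c\proj{k_ak_c}{l_cl_a}=\delta_{k_c,l_c}\proj{k_a}{l_a}$ agrees with the right-hand side on each surviving matrix unit, and linearity then yields the theorem. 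The terms whose sign would spoil the equality are precisely the ones the superselection rule removes, which is the conceptual crux of the statement.
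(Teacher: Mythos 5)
Your proposal is correct and follows essentially the same route as the paper's own proof: reduce by linearity to the matrix units of \eq{st}, note that the qubit-side trace $\mathcal{Q}^{-1}\vert_a\circ\tr^\text{Bos.}_c\circ\mathcal{Q}$ is sign-free and yields $\delta_{k_c,l_c}\proj{k_a}{l_a}$, and show that the fermionic reordering phase---which you make explicit as $(-1)^{|k_c|(|k_a|+|l_a|)}$---is trivial because the superselection rule combined with the surviving constraint $k_c=l_c$ forces $|k_a|\equiv|l_a|\pmod 2$. The paper states this last step more tersely (the $a$-operator strings to the left and right of $\proj{0}{0}$ have equal parity, hence produce the same global sign), but it is the same argument; your version is if anything slightly more careful, since it records that non-compliant matrix units enter \eq{st} with vanishing coefficients.
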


\begin{proof}
Since by virtue of the definition \eq{2} the fermionic partial trace is a linear operator, we can consider each of the terms in \eq{st} independently. For each of these terms, we have
\begin{align}&\tr_c{\proj{k_ak_c}{l_cl_a}}=\nonumber\\&\delta_{k_c,l_c}( c_m)^{j_1}\ldots\bra{0}_c( c_1)^{j_1}(a^\dagger_1)^{i_1}\ldots(a^\dagger_n)^{i_n}(c^\dagger_1)^{j_1}\ldots(c^\dagger_m)^{j_m}\nonumber\\&\proj{0}{0}( c_m)^{j_m}\ldots( c_1)^{j_1}(a_n)^{k_n}\ldots(a_1)^{k_1}(c^\dagger_1)^{j_1}\ldots(c^\dagger_m)^{j_m}\ket{0}_c\nonumber\\&=(a^\dagger_1)^{i_1}\ldots(a^\dagger_n)^{i_n}\proj{0}{0}(a_n)^{k_n}\ldots(a_1)^{k_1}\nonumber\\&=\delta_{k_c,l_c}\proj{k_a}{l_a}\label{pro}\end{align}
the last equality holding because in the second line of \eq{pro} the number of $a$-operators to the left and  the number of $a$-operators to the right of $\proj{0}{0}$ are of the same parity, due to the superselection rule. Hence, both left and right operators give the same global sign when all the anticommutators are computed, giving no extra anticommutation signs.

The trace on $\mathcal{Q}(\rho)$ is also a linear operator, so we also consider the terms independently. In this case, no sign issues appear and the reduced state is simply $\proj{k_a}{l_a}_\text{Bos.}$, where the subscript reminds that this is still a qubit state. But now, $\mathcal{Q}^{-1}\vert_a$ maps $\proj{k_a}{l_a}_\text{Bos.}$ to $\proj{k_a}{l_a}$, which ends the proof.\qed\end{proof}

Note that this theorem implies that any operation being a function of the fermionic matrix elements only (for instance, to study separability or correlations) is bound to give the same results for $\tr_c(\rho)$ and $\tr_c^\text{Bos.}\mathcal{Q}(\rho)$. As in our article we only use this kind of operations, this implies that our work is devoid of algebraic mistakes.

We have just proven that using $\mathcal{Q}$ guarantees the correctness of our results. As we mentioned before, we can also define other mappings analogous to $\mathcal{Q}$, but which take bases different from \eq{1} as the starting point for their definition. For instance, we could rearrange the operators in \eq{1} in an arbitrary ways, and then by an analogous definition to that in section \ref{qdef} we would obtain different mappings, say $\mathcal{D}_1$ and  $\mathcal{D}_2$ among others.

The first section of our work (more concretely, eqs. (5-6) and their associated discussion), where \cite{br} considers that there are algebraic mistakes, is just a simple proof by counterexample of the fact that, in general,
\begin{equation}\mathcal{D}_1^{-1}\vert_a\circ\tr^\text{Bos.}_c\circ\mathcal{D}_1\neq\mathcal{D}_2^{-1}\vert_a\circ\tr^\text{Bos.}_c\circ\mathcal{D}_2.\end{equation}

 In other words, all \cite{Mig2} does is to consider the class of all these mappings $\mathcal{D}_i$ (each identified by the fermionic operator ordering they come from) and analyses the entanglement properties of the qubit systems they give rise to. This does not lead to physical results in general, but it is what has been done before in the literature. We also provide a particular class of operator orderings (and their associated $\mathcal{D}_i$) which can be regarded as physical; when we perform the partial trace in the qubit state, the states obtained are exactly those obtained with the braided tensor product formalism in \cite{br}. 
 
 A simple proof of the validity of the formalism developed in \cite{Mig2} can be seen in \cite{Mig5}, where it is shown that in general entanglement of fermionic systems must converge in the infinite acceleration limit. This work shows how the measures computed using nonphysical operator orderings do not comply with this requirement, while measures computed using the physical ordering \cite{Mig2} do.
 
Considering these various unphysical orderings is far from being a fruitless mathematical exercise; endowing a fermionic system with some particular tensor product structure may prove quite useful, for instance, when computing the field vacuum and excitations. This was the main reason for doing so, as can be seen in our paper as well as in a number of other works \cite{AlsingSchul,Edu2,Edu3,allicupahc,allicupahc2,Shapoor,Geneferm,chor1,Edu9,chor2}. Even \cite{Cirac} (a seminal article cited in \cite{br}) uses a definite tensor product structure defined on a fermionic Hilbert space to provide a separability criterion for  fermionic systems.

We are not claiming, as \cite{br} suggests, that endowing a tensor product structure a fermionic field becomes a bosonic one. This would indeed lead to a number of paradoxes thoroughly pointed out by \cite{br}. However, this is not what we did in \cite{Mig2}. Rather, we established a correspondence between fermionic and bosonic fields and study how physical results for the fermionic case may be obtained from the corresponding bosonic systems.

To sum up, what \cite{br}  discusses is just another formalism for studying the same issues, not unlike other approaches such as defining entanglement only through observables as it is done in \cite{Juanjo1}, or through observable-induced tensor products \cite{Juanjo2}. Furthermore, the endowing of tensor product structures on fermionic systems as a separability criterion is not new \cite{Cirac}. Our aim when we wrote the paper was to make it accessible for all the people who had written papers about fermionic entanglement in non-inertial frames without taking proper care of the problems regarding the tensor product structure of fermionic systems. 

We realise that, maybe, the choice of title and abstract in \cite{Mig2} was not very fortunate. Instead of calling it `ambiguity in noninertial frames', which  may suggest that we were \emph{reporting} a real ambiguity, we should have emphasised the fact that we were pointing out and correcting an issue present in previous literature on fermionic entanglement in noninertial frames.

\section{Superselection rules and entanglement measures}\label{super}

The author of \cite{br} notes that the state in equation (3) of reference \cite{Mig2}, namely
\begin{align}\ket{\Psi}&=\frac{1}{2}\left(\ket{00}+\ket{01}+\ket{10}+\ket{11}\right)\label{st3}\end{align} with
\begin{align} \ket{00}&=\ket{0},\; \ket{10}=a^\dagger\ket{0},\;\ket{01}=b^\dagger\ket{0},\nonumber\\\ket{11}&=a^\dagger b^\dagger\ket{0}\end{align}
 does not comply with the fermionic superselection rule. This is true, but not relevant for the argument. As reference \cite{weinbergo} explains, the issue of wether or not a certain field state can or cannot be formed in practice cannot be settled by reference to symmetry principles resulting in superselection rules (in the case at hand, these would be rotational and CP symmetries), since we can always enlarge the symmetry group to a new one which lacks these superselection rules \cite{weinbergo}.

In any case, we were not claiming to be able to construct such states. In section of \cite{Mig2} we were merely presenting an academical example and not studying physical field states yet; the state \eqref{st1} reflects the fact that a state (a legitimate element of the Fock space) may be regarded as separable or entangled, if we map it to a qubit system, depending on the operator ordering chosen. In any case, \eqref{st1} or the three-mode state  (eq. (10) of \cite{br} and (5) of \cite{Mig2}) are just examples to illustrate the kind of manipulations which are employed in \cite{Mig2}.  Although the very general family of states presented in \cite{Mig2} does not fulfill the superselection rule, all the particular physical field states studied in that work do comply with it. This is also the case of the state \eqref{st1} which is nothing but a reformulation of the example state \eqref{st3} but compliant with the superselection rule.
 
Continuing with this point, \cite{br}  claims that neither the Grassman field state
\begin{align}\ket{\Psi}=\frac{1}{\sqrt{2}}\left(\ket{0}_\text{A}\ket{0}_\text{R}+\ket{1}_\text{A}\ket{1}_\text{R}\right)\label{jaja1}\end{align}
nor the Dirac state
\begin{align}\ket{\Psi}=\frac{1}{\sqrt{2}}\left(\ket{\up}_\text{A}\ket{\down}_\text{R}+\ket{\down}_\text{A}\ket{\up}_\text{R}\right)\label{jaja2}\end{align}
are analysed in \cite{Mig2}. However, state \eqref{jaja1} corresponds to figure 2 of \cite{Mig2}, whereas state \eqref{jaja2} corresponds to figure 3 of that work. We acknowledge that this last case may be confusing due to the fact that in eq. (20) of \cite{Mig2}, where Alice's basis should have been denoted as $\{\ket{\up},\ket{\down}\}$ instead of $\{\ket{0},\ket{1}\}$.

In spite of these answers, \cite{br} indeed raised an important style flaw in \cite{Mig2}: We introduced states and families of states not compliant with the superselection rule, when this was not necessary at all for our arguments or conclusions. In one case, the use of states violating the rule allowed us to present clearer examples of the phenomena under consideration; in the other, it was done only for the sake of generality, although the particular cases considered were compliant with the rule. But the introduction of these states has led the discussion far from the main points raised in \cite{Mig2}, and the derived confusion outweighs the benefits obtained by far. If we were to write \cite{Mig2} anew, we definitely would not use the same examples.

After this point, the criticism in \cite{br} broadens from \cite{Mig2} to a questioning of the entanglement measures employed in many previous works in field entanglement in non-inertial frames.  \cite{br} claims that, as a consequence of the superselection rule, some entanglement measures are ill-defined. While it is true that these entanglement measures cannot be interpreted exactly in the same way as for bosonic fields, some points of this interpretation remain, however: If a fermionic bipartite field state $\rho$ is separable, meaning that it may be written as
\begin{align}\rho=\sum_i p_i A_i B_i\proj{0}{0}B_iA_i,\quad p_i>0\label{st2}\end{align}
with $A_i$ containing only creation operators of the first bipartition, and $B_i$ containing only operators of the second, then it has vanishing negativity. The converse statement also holds for $2\times2$ and $2\times3$ dimensional systems.

Furthermore, \cite{br} asserts that the superselection rule  invalidates the use of entanglement measures in fermionic fields, save in a few specific cases. However, this criticism is only relevant as long as we consider field states with different superselection parities for each of the parties. If we consider the field state of \cite{Edu9}
\begin{align}\ket{\Psi}=\frac{1}{\sqrt{2}}\left(\ket{0}_\text{A}\ket{0}_\text{R}+\ket{1}_\text{A}\ket{1}_\text{R}\right)\end{align}
then it is true that a Fock space basis for any of the parties (Alice or Rob) must include states in different superselection sectors, such as $\ket{0}$ and $\ket{1}$. But if following \cite{Mig2} we consider e.g. a Dirac singlet state, such as
\begin{align}\ket{\Psi}=\frac{1}{\sqrt{2}}\left(\ket{\up}_\text{A}\ket{\down}_\text{R}+\ket{\down}_\text{A}\ket{\up}_\text{R}\right)\end{align}
then the Fock space basis for any party is composed of states in only one superselection sector. In this case, entanglement measures such as negativity recover exactly the same meaning they would have for qubit systems, since the superselection rule now becomes trivial.

Finally, \cite{br} ellaborates over the various benefits of studying field entanglement in non-inertial frames with the tools of quantum Shannon theory. Though we acknowledge these benefits, we see it as a complementary tool rather than an invalidation of the use of entanglement measures in noninertial frames. Negativity (the measure we use in \cite{Mig2}) for instance is a direct measure of the distillable entanglement present in the state under consideration, which may be purified by means of standard protocols\cite{purif}. 

It can be shown, however, that the qualitative results are not very dependent on the entanglement measure employed either. For instance, \cite{br} computes the entanglement of formation (that has a clear operational meaning) in the same setting we consider, and there are no qualitative differences with our result: the behaviour of the entanglement of formation is qualitatively the same as negativity. 

Independently of the many uses quantum Shannon theory may have, entanglement measures provide a clear and straightforward way to study quantum correlations between two parties; this, and nothing else, is what has been done in the numerous works on the subject.

\section{Conclusions}\label{cojoniak}

We have shown that there are no algebraic mistakes in \cite{Mig2}. We are not directly studying a fermionic system, but rather a qubit one obtained by means of a well-defined mapping. We simply characterise the set of these mappings and find the circumstances under which they give rise to physical results. The mere existence of \cite{br}, however, proves that we have failed in our attempt to transmit our ideas in a clear and concise way. It is our hope that this comment may shed some light on the more confusing parts of \cite{Mig2}.

It is true that we use states violating the superselection rule, but do so only for an academical example in which the parity of the fermionic states plays no role. The main results of \cite{Mig2} all comply with the superselection rule. Nevertheless, all of our results could (and should) have been presented with states compliant with the superselection rule, thus avoiding these concerns. 

Also, although the interpretation of entanglement measures for fermionic systems may not be the same as for qubits in general, there are field states for which such an interpretation may be recovered (such a Dirac singlet state).

Finally, though quantum Shannon theory may be an useful tool to study quantum information transmission, entanglement measures provide a clear and straightforward way to study quantum correlations between two parties, wether in a relativistic setting or not.

\section{Acknowledgements}
We thank Juan Le\'on for our interesting and useful discussions related to this work.  

%merlin.mbs apsrev4-1.bst 2010-07-25 4.21a (PWD, AO, DPC) hacked
%Control: key (0)
%Control: author (8) initials jnrlst
%Control: editor formatted (1) identically to author
%Control: production of article title (-1) disabled
%Control: page (0) single
%Control: year (1) truncated
%Control: production of eprint (0) enabled
%

%\bibliography{references}

\begin{thebibliography}{14}%
\makeatletter
\providecommand \@ifxundefined [1]{%
 \@ifx{#1\undefined}
}%
\providecommand \@ifnum [1]{%
 \ifnum #1\expandafter \@firstoftwo
 \else \expandafter \@secondoftwo
 \fi
}%
\providecommand \@ifx [1]{%
 \ifx #1\expandafter \@firstoftwo
 \else \expandafter \@secondoftwo
 \fi
}%
\providecommand \natexlab [1]{#1}%
\providecommand \enquote  [1]{``#1''}%
\providecommand \bibnamefont  [1]{#1}%
\providecommand \bibfnamefont [1]{#1}%
\providecommand \citenamefont [1]{#1}%
\providecommand \href@noop [0]{\@secondoftwo}%
\providecommand \href [0]{\begingroup \@sanitize@url \@href}%
\providecommand \@href[1]{\@@startlink{#1}\@@href}%
\providecommand \@@href[1]{\endgroup#1\@@endlink}%
\providecommand \@sanitize@url [0]{\catcode `\\12\catcode `\$12\catcode
  `\&12\catcode `\#12\catcode `\^12\catcode `\_12\catcode `\%12\relax}%
\providecommand \@@startlink[1]{}%
\providecommand \@@endlink[0]{}%
\providecommand \url  [0]{\begingroup\@sanitize@url \@url }%
\providecommand \@url [1]{\endgroup\@href {#1}{\urlprefix }}%
\providecommand \urlprefix  [0]{URL }%
\providecommand \Eprint [0]{\href }%
\providecommand \doibase [0]{http://dx.doi.org/}%
\providecommand \selectlanguage [0]{\@gobble}%
\providecommand \bibinfo  [0]{\@secondoftwo}%
\providecommand \bibfield  [0]{\@secondoftwo}%
\providecommand \translation [1]{[#1]}%
\providecommand \BibitemOpen [0]{}%
\providecommand \bibitemStop [0]{}%
\providecommand \bibitemNoStop [0]{.\EOS\space}%
\providecommand \EOS [0]{\spacefactor3000\relax}%
\providecommand \BibitemShut  [1]{\csname bibitem#1\endcsname}%
\let\auto@bib@innerbib\@empty
%</preamble>
\bibitem [{\citenamefont {Montero}\ and\ \citenamefont
  {Mart\'\i{}n-Mart\'\i{}nez}(2011)}]{Mig2}%
  \BibitemOpen
  \bibfield  {author} {\bibinfo {author} {\bibfnamefont {M.}~\bibnamefont
  {Montero}}\ and\ \bibinfo {author} {\bibfnamefont {E.}~\bibnamefont
  {Mart\'\i{}n-Mart\'\i{}nez}},\ }\href@noop {} {\bibfield  {journal} {\bibinfo
   {journal} {Phys. Rev. A}\ }\textbf {\bibinfo {volume} {83}},\ \bibinfo
  {pages} {062323} (\bibinfo {year} {2011})}\BibitemShut {NoStop}%
\bibitem [{\citenamefont {Br\'adler}(2011)}]{br}%
  \BibitemOpen
  \bibfield  {author} {\bibinfo {author} {\bibfnamefont {K.}~\bibnamefont
  {Br\'adler}},\ }\href@noop {} {\  (\bibinfo {year} {2011})},\ \Eprint
  {http://arxiv.org/abs/1108.5553} {arXiv:1108.5553} \BibitemShut {NoStop}%
  \bibitem [{\citenamefont {Br\'adler}\ and\ \citenamefont
  {J\'auregui}(2011)}]{br2}%
  \BibitemOpen
  \bibfield  {author} {\bibinfo {author} {\bibfnamefont {K.}~\bibnamefont
  {Br\'adler}}\ and\ \bibinfo {author} {\bibfnamefont {R.}~\bibnamefont
  {J\'auregui}},\ }\href@noop {} {\bibfield  {journal} {\bibinfo
   {journal} {Phys. Rev. A }\ }\textbf {\bibinfo {volume} {85}}\ \bibinfo
  {pages} {016301} (\bibinfo {year} {2012}) arXiv:1201.1045}\BibitemShut {NoStop}%
\bibitem [{\citenamefont {Alsing}\ \emph {et~al.}(2006)\citenamefont {Alsing},
  \citenamefont {Fuentes-Schuller}, \citenamefont {Mann},\ and\ \citenamefont
  {Tessier}}]{AlsingSchul}%
  \BibitemOpen
  \bibfield  {author} {\bibinfo {author} {\bibfnamefont {P.~M.}\ \bibnamefont
  {Alsing}}, \bibinfo {author} {\bibfnamefont {I.}~\bibnamefont
  {Fuentes-Schuller}}, \bibinfo {author} {\bibfnamefont {R.~B.}\ \bibnamefont
  {Mann}}, \ and\ \bibinfo {author} {\bibfnamefont {T.~E.}\ \bibnamefont
  {Tessier}},\ }\href@noop {} {\bibfield  {journal} {\bibinfo  {journal} {Phys.
  Rev. A}\ }\textbf {\bibinfo {volume} {74}},\ \bibinfo {pages} {032326}
  (\bibinfo {year} {2006})}\BibitemShut {NoStop}%
\bibitem [{\citenamefont {Le\'on}\ and\ \citenamefont
  {Mart\'\i{}n-Mart\'\i{}nez}(2009)}]{Edu2}%
  \BibitemOpen
  \bibfield  {author} {\bibinfo {author} {\bibfnamefont {J.}~\bibnamefont
  {Le\'on}}\ and\ \bibinfo {author} {\bibfnamefont {E.}~\bibnamefont
  {Mart\'\i{}n-Mart\'\i{}nez}},\ }\href@noop {} {\bibfield  {journal} {\bibinfo
   {journal} {Phys. Rev. A}\ }\textbf {\bibinfo {volume} {80}},\ \bibinfo
  {pages} {012314} (\bibinfo {year} {2009})}\BibitemShut {NoStop}%
\bibitem [{\citenamefont {Mart\'\i{}n-Mart\'\i{}nez}\ and\ \citenamefont
  {Le\'on}(2009)}]{Edu3}%
  \BibitemOpen
  \bibfield  {author} {\bibinfo {author} {\bibfnamefont {E.}~\bibnamefont
  {Mart\'\i{}n-Mart\'\i{}nez}}\ and\ \bibinfo {author} {\bibfnamefont
  {J.}~\bibnamefont {Le\'on}},\ }\href@noop {} {\bibfield  {journal} {\bibinfo
  {journal} {Phys. Rev. A}\ }\textbf {\bibinfo {volume} {80}},\ \bibinfo
  {pages} {042318} (\bibinfo {year} {2009})}\BibitemShut {NoStop}%
\bibitem [{\citenamefont {Pan}\ and\ \citenamefont
  {Jing}(2008{\natexlab{a}})}]{allicupahc}%
  \BibitemOpen
  \bibfield  {author} {\bibinfo {author} {\bibfnamefont {Q.}~\bibnamefont
  {Pan}}\ and\ \bibinfo {author} {\bibfnamefont {J.}~\bibnamefont {Jing}},\
  }\href@noop {} {\bibfield  {journal} {\bibinfo  {journal} {Phys. Rev. A}\
  }\textbf {\bibinfo {volume} {77}},\ \bibinfo {pages} {024302} (\bibinfo
  {year} {2008}{\natexlab{a}})}\BibitemShut {NoStop}%
\bibitem [{\citenamefont {Pan}\ and\ \citenamefont
  {Jing}(2008{\natexlab{b}})}]{allicupahc2}%
  \BibitemOpen
  \bibfield  {author} {\bibinfo {author} {\bibfnamefont {Q.}~\bibnamefont
  {Pan}}\ and\ \bibinfo {author} {\bibfnamefont {J.}~\bibnamefont {Jing}},\
  }\href@noop {} {\bibfield  {journal} {\bibinfo  {journal} {Phys. Rev. D}\
  }\textbf {\bibinfo {volume} {78}},\ \bibinfo {pages} {065015} (\bibinfo
  {year} {2008}{\natexlab{b}}).}
  
  \bibitem[{\citenamefont{Moradi}(2009)}]{Shapoor}
\bibinfo{author}{\bibfnamefont{S.}~\bibnamefont{Moradi}},
  \bibinfo{journal}{Phys. Rev. A} \textbf{\bibinfo{volume}{79}},
  \bibinfo{pages}{064301} (\bibinfo{year}{2009}).
  
\bibitem [{\citenamefont {Ostapchuk}\ and\ \citenamefont
  {Mann}(2009)}]{Geneferm}%
  \BibitemOpen
  \bibfield  {author} {\bibinfo {author} {\bibfnamefont {D.~C.~M.}\
  \bibnamefont {Ostapchuk}}\ and\ \bibinfo {author} {\bibfnamefont {R.~B.}\
  \bibnamefont {Mann}},\ }\href@noop {} {\bibfield  {journal} {\bibinfo
  {journal} {Phys. Rev. A}\ }\textbf {\bibinfo {volume} {79}},\ \bibinfo
  {pages} {042333} (\bibinfo {year} {2009})}\BibitemShut {NoStop}%
\bibitem [{\citenamefont {Wang}\ and\ \citenamefont {Jing}(2010)}]{chor1}%
  \BibitemOpen
  \bibfield  {author} {\bibinfo {author} {\bibfnamefont {J.}~\bibnamefont
  {Wang}}\ and\ \bibinfo {author} {\bibfnamefont {J.}~\bibnamefont {Jing}},\
  }\href@noop {} {\bibfield  {journal} {\bibinfo  {journal} {Phys. Rev. A}\
  }\textbf {\bibinfo {volume} {82}},\ \bibinfo {pages} {032324} (\bibinfo
  {year} {2010})}\BibitemShut {NoStop}%
\bibitem [{\citenamefont {Bruschi}\ \emph {et~al.}(2010)\citenamefont
  {Bruschi}, \citenamefont {Louko}, \citenamefont {Mart\'\i{}n-Mart\'\i{}nez},
  \citenamefont {Dragan},\ and\ \citenamefont {Fuentes}}]{Edu9}%
  \BibitemOpen
  \bibfield  {author} {\bibinfo {author} {\bibfnamefont {D.~E.}\ \bibnamefont
  {Bruschi}}, \bibinfo {author} {\bibfnamefont {J.}~\bibnamefont {Louko}},
  \bibinfo {author} {\bibfnamefont {E.}~\bibnamefont
  {Mart\'\i{}n-Mart\'\i{}nez}}, \bibinfo {author} {\bibfnamefont
  {A.}~\bibnamefont {Dragan}}, \ and\ \bibinfo {author} {\bibfnamefont
  {I.}~\bibnamefont {Fuentes}},\ }\href@noop {} {\bibfield  {journal} {\bibinfo
   {journal} {Phys. Rev. A}\ }\textbf {\bibinfo {volume} {82}},\ \bibinfo
  {pages} {042332} (\bibinfo {year} {2010})}\BibitemShut {NoStop}%
\bibitem [{\citenamefont {Khan}\ and\ \citenamefont {Khan}(2011)}]{chor2}%
  \BibitemOpen
  \bibfield  {author} {\bibinfo {author} {\bibfnamefont {S.}~\bibnamefont
  {Khan}}\ and\ \bibinfo {author} {\bibfnamefont {M.~K.}\ \bibnamefont
  {Khan}},\ }\href@noop {} {\bibfield  {journal} {\bibinfo  {journal} {J. of
  Phys. A}\ }\textbf {\bibinfo {volume} {44}} (\bibinfo {year}
  {2011})}\BibitemShut {NoStop}%
  \bibitem [{\citenamefont {Ba\~nuls}\ \emph {et~al.}(2007)\citenamefont
  {Ba\~nuls}, \citenamefont {Cirac},\ and\ \citenamefont {Wolf}}]{Cirac}%
  \BibitemOpen
  \bibfield  {author} {\bibinfo {author} {\bibfnamefont {M.-C.}\ \bibnamefont
  {Ba\~nuls}}, \bibinfo {author} {\bibfnamefont {J.~I.}\ \bibnamefont {Cirac}},
  \ and\ \bibinfo {author} {\bibfnamefont {M.~M.}\ \bibnamefont {Wolf}},\
  }\href@noop {} {\bibfield  {journal} {\bibinfo  {journal} {Phys. Rev. A},\
  }\textbf {\bibinfo {volume} {76}},\ \bibinfo {pages} {022311} (\bibinfo
  {year} {2007})}\BibitemShut {NoStop}%
 \bibitem [{\citenamefont {Montero}\ and\ \citenamefont
  {Mart\'\i{}n-Mart\'\i{}nez}(2011)}]{Mig5}%
  \BibitemOpen
  \bibfield  {author} {\bibinfo {author} {\bibfnamefont {M.}~\bibnamefont
  {Montero}}\ and\ \bibinfo {author} {\bibfnamefont {E.}~\bibnamefont
  {Mart\'\i{}n-Mart\'\i{}nez}},\ }\href@noop {} {\bibfield  {journal} {\bibinfo
   {journal} {arxiv:1111.6070}\ } (\bibinfo {year} {2011})}\BibitemShut {NoStop}%
\bibitem [{\citenamefont {Barnum}\ \emph {et~al.}(2004)\citenamefont {Barnum},
  \citenamefont {Knill}, \citenamefont {Ortiz}, \citenamefont {Somma},\ and\
  \citenamefont {Viola}}]{Juanjo1}%
  \BibitemOpen
  \bibfield  {author} {\bibinfo {author} {\bibfnamefont {H.}~\bibnamefont
  {Barnum}}, \bibinfo {author} {\bibfnamefont {E.}~\bibnamefont {Knill}},
  \bibinfo {author} {\bibfnamefont {G.}~\bibnamefont {Ortiz}}, \bibinfo
  {author} {\bibfnamefont {R.}~\bibnamefont {Somma}}, \ and\ \bibinfo {author}
  {\bibfnamefont {L.}~\bibnamefont {Viola}},\ }\href@noop {} {\bibfield
  {journal} {\bibinfo  {journal} {Phys. Rev. Lett.},\ }\textbf {\bibinfo
  {volume} {92}},\ \bibinfo {pages} {107902} (\bibinfo {year}
  {2004})}\BibitemShut {NoStop}%
\bibitem [{\citenamefont {Zanardi}\ \emph {et~al.}(2004)\citenamefont
  {Zanardi}, \citenamefont {Lidar},\ and\ \citenamefont {Lloyd}}]{Juanjo2}%
  \BibitemOpen
  \bibfield  {author} {\bibinfo {author} {\bibfnamefont {P.}~\bibnamefont
  {Zanardi}}, \bibinfo {author} {\bibfnamefont {D.~A.}\ \bibnamefont {Lidar}},
  \ and\ \bibinfo {author} {\bibfnamefont {S.}~\bibnamefont {Lloyd}},\
  }\href@noop {} {\bibfield  {journal} {\bibinfo  {journal} {Phys. Rev.
  Lett.},\ }\textbf {\bibinfo {volume} {92}},\ \bibinfo {pages} {060402}
  (\bibinfo {year} {2004})}\BibitemShut {NoStop}%
  \bibitem [{\citenamefont {Weinberg}(1995)\citenamefont
  {Weinberg}}]{weinbergo}%
  \BibitemOpen
  \bibfield  {author} {\bibinfo {author} {\bibfnamefont {S.}~\bibnamefont
  {Weinberg}}},\
  \href@noop {} {\bibfield  {journal} {\bibinfo  {journal} {The Quantum Theory of Fields I}\ }
  (\bibinfo {year} {Cambridge University Press, 1995})}\BibitemShut {NoStop}%
\bibitem [{\citenamefont {Pan}\ \emph {et~al.}(2006)\citenamefont {Pan},
  \citenamefont {Simon}, \citenamefont {Brukner},\ and\ \citenamefont
  {Zeilinger}}]{purif}%
  \BibitemOpen
  \bibfield  {author} {\bibinfo {author} {\bibfnamefont {J.-~W.}\ \bibnamefont
  {Pan}}, \bibinfo {author} {\bibfnamefont {C.}~\bibnamefont
  {Simon}}, \bibinfo {author} {\bibfnamefont {\u{C}.}\ \bibnamefont
  {Brukner}}, \ and\ \bibinfo {author} {\bibfnamefont {A.}\ \bibnamefont
  {Zeilinger}},\ }\href@noop {} {\bibfield  {journal} {\bibinfo  {journal} {Nature}
  \ }\textbf {\bibinfo {volume} {410}},\ \bibinfo {pages} {1067-1070}
  (\bibinfo {year} {2001})}\BibitemShut {NoStop}%
%\bibitem [{\citenamefont {Br\'adler}(2011)}]{bradxiv}%
 % \BibitemOpen
 % \bibfield  {author} {\bibinfo {author} {\bibfnamefont {K.}~\bibnamefont
 % {Br\'adler}},\ }\href@noop {} {\  (\bibinfo {year} {2011})},\ \Eprint
 % {http://arxiv.org/abs/1108.5553} {arXiv:1108.5553} \BibitemShut {NoStop}%
% \bibitem [{\citenamefont {Montero}\ and\ \citenamefont
 % {Mart\'\i{}n-Mart\'\i{}nez}(2011)}]{pc}%
  %\BibitemOpen
  %\bibfield  {author} {\bibinfo {author} {\bibfnamefont {M.}~\bibnamefont
  %{Montero}}\ and\ \bibinfo {author} {\bibfnamefont {E.}~\bibnamefont
  %{Mart\'\i{}n-Mart\'\i{}nez}},\ }\href@noop {} {}
  %{journal report} \BibitemShut {NoStop}%
\end{thebibliography}

\end{document}